\def\firstpage{1}
\newtheorem{algorithm}{Algorithm}
\begin{document}

\titlefigurecaption{{\large \bf \rm Journal of Statistics Applications \& Probability Letters }\\ {\it\small An International Journal}}

\title{Comparing Seventeen Interval Estimates for a Bivariate Normal Correlation Coefficient}

\author{Mohammad Reza Kazemi\hyperlink{author1}
{$^1$} and Ali Akbar Jafari\hyperlink{author2}
{$^2$}}
\institute{$^1$Department of Statistics, Fasa University, Fasa, Iran\\
$^2$Department of Statistics, Yazd University, Yazd,  Iran}

\titlerunning{Comparing Seventeen Interval Estimates for  Correlation Coefficient}
\authorrunning{M. R. Kazemi, A. A. Jafari}

\mail{kazemi@fasau.ac.ir}

\received{2014}
\revised{2014}
\accepted{...}
\published{...}

\abstracttext{In this paper, we consider the problem of constructing confidence interval for the correlation coefficient in a bivariate normal distribution. For this problem, we found fifteen approaches in literatures. Also, we have proposed a generalized confidence interval and a parametric bootstrap confidence interval. The coverage probabilities and expected lengths of these seventeen approaches are evaluated and compared via simulation study. In addition, robustness of the methods is considered in the comparisons by the non-normal distributions. Two real examples are given to illustrate the approaches.}

\keywords{Confidence interval, Correlation coefficient, Coverage probability, Expected length, Bivariate normal distribution.}

\maketitle

\section{Introduction}

A certain departure from stochastic independence between two random variables is assessed by correlation and a well-known measure of linear association between two random variables is the (Pearson product-moment) correlation coefficient. One can investigate the applications of the correlation coefficient in all fields of sciences such as engineering, medicine, psychology, biology and so on.

In a bivariate normal distribution,
\cite{fisher-15}
proposed two expressions for the exact density function of sample correlation coefficient, and
\cite{hotelling-53}
proposed an expression in term of hypergeometric functions.
\cite{ch-li-pa-08}
obtained and tabulated the critical values for the exact test of hypothesis about the correlation coefficient using these expressions.
\cite{shieh-06}
provided an expression for the distribution of sample correlation coefficient using a theorem in linear regression.

The well-known z-transform of
\cite{fisher-21}
is a usual method for inference about the correlation coefficient and constructing confidence interval for this parameter. After that, many authors investigated other approximations and confidence intervals:
\cite{hotelling-53}
offered four modifications of Fisher's z-transformation.
\cite{ruben-66}
obtained a simple approximate normalization for the correlation coefficient in normal samples. For testing,
\cite{samiuddin-70}
proposed a test statistic, and
\cite{muddapur-88}
developed this test statistic.
\cite{muddapur-88}
also obtained a test statistic based on the F distribution and used it to construct a confidence interval.
\cite{jeyaratnam-92}
gave another form of this confidence interval.
\cite{su-wo-07}
used method of signed log-likelihood ratio statistic and derived two confidence intervals. Using the concept of generalized pivotal variable,
\cite{kr-xi-07}
developed a generalized confidence interval. Using Cornish-Fisher expansions,
\cite{wi-na-12}
gave two approximate confidence intervals.
\cite{ha-pr-11}
derived an alternative estimator of Pearson's correlation coefficient in terms of the ranges. Then they found an approximate confidence interval for this parameter and provided a new approximation for the density function of sample correlation coefficient.

Here, we consider inference about the correlation coefficient parameter and propose a generalized pivotal variable and a parametric bootstrap (PB) approach for constructing confidence interval for this parameter.
This paper is organized as follows: In Section \ref{sec.CI}, we first review the existing approaches to construct confidence interval for the correlation coefficient parameter. Then, we derive a generalized pivotal variable and a PB approach for this problem. Simulation studies are performed in Section \ref{sec.sim} to evaluate and compare the coverage probability and expected length of these approaches. Two real examples are given in Section \ref{sec.ex}.
The paper is concluded in Section \ref{sec.con}.

\section{Confidence intervals for $\rho$}
\label{sec.CI}

Let $(X_{11},X_{21})',\dots ,(X_{1n},X_{2n})'$ be a random sample from a bivariate normal distribution with mean vector
${\boldsymbol \mu }=({\mu }_1,{\mu }_2 )'$ and variance-covariance matrix
\[\Sigma =\left[ \begin{array}{cc}
{\sigma }^2_1 & \rho {\sigma }_1{\sigma }_2 \\
\rho {\sigma }_1{\sigma }_2 & {\sigma }^2_2 \end{array}
\right],\]
where $\rho $ is the correlation coefficient between first and second components. The maximum likelihood estimation for $\Sigma $ is $S=(S_{ij})=\frac{1}{n}\sum^n_{i=1}{({{\boldsymbol X}}_i-\bar{{\boldsymbol X}})({{\boldsymbol X}}_i-\bar{{\boldsymbol X}})'}$, where
\[S_{ii}=S^2_i=\frac{1}{n}\sum^n_{j=1}{{(X_{ij}-{\bar{X}}_i)}^2},\ \ \ \ i=1,2,\ \ \ \ \ \ \ \ \ \ S_{12}=\frac{1}{n}\sum^n_{j=1}{(X_{1j}-{\bar{X}}_1)(X_{2j}-{\bar{X}}_2)},\]
and ${\bar{X}}_i=\frac{1}{n}\sum^n_{j=1}{X_{ij}}$, $i=1,2$. Therefore, the maximum likelihood estimation of $\rho $ is $R=\frac{S_{12}}{S_1S_2}$.

Using the expressions for density function of $R$ given by
\cite{fisher-15}
or by
\cite{hotelling-53},
one can construct an exact $100(1-\alpha)\%$ confidence interval for $\rho$ by numerical solving the following equations
\[\int^L_{-1}{f(r;\rho)dr=\frac{\alpha }{2}},\ \ \ \ \ \ \ \ \ \ \int^1_U{f(r;\rho)dr=\frac{\alpha }{2}},\]
where $f(r;\rho)$ is the density function of sample correlation coefficient. For more details on the form of the $f(r;\rho)$, one can refer to
\cite{hotelling-53}.
However, this is a very difficult method and needs solving a complex integral. Therefore, some approximations methods are proposed to construct confidence interval for $\rho $.

In this Section, we first review the existing methods to construct confidence interval for $\rho $. Then we will propose a generalized pivotal variable and a PB approach.

\subsection{Fisher's z-transformation}
The Fisher's z-transformation is the most well-known and popular
approximation for the sample correlation coefficient $R$. It is also known as the variance stabilizing transformation.
\cite{fisher-21}
showed that
\begin{equation}\label{eq.zt}
Z=\frac{1}{2}{\log (\frac{1+R}{1-R})}={\tanh }^{-1} (R),
\end{equation}
has an asymptotic normal distribution with mean
$\zeta =\frac{1}{2}\log  (\frac{1+\rho }{1-\rho })={\tanh^{{-1}}(\rho ) }$ and variance $1/(n-3)$. Therefore, an approximate $100(1-\alpha)\%$ confidence interval for $\rho $ is given by
\[\left(\tanh(Z-\frac{Z_{\alpha/2}}{\sqrt{n-3}} ), \tanh  (Z+\frac{Z_{\alpha/2}}{\sqrt{n-3}})\right),\]
where $Z_\gamma$ is the  $\gamma $th upper quantile of the standard normal distribution.

\subsection{Hotelling's approximations}

\cite{hotelling-53}
gave four modifications of Fisher's z-transformation as
\begin{eqnarray*}
&&Z_1=Z-\frac{7Z+R}{8(n-1)},\ \ {\zeta }_1=\zeta -\frac{7\zeta +\rho }{8(n-1)},\\
&&Z_2=Z-\frac{7Z+R}{8(n-1)}-\frac{119Z+57R+3R^2}{384{(n-1)}^2},\ \
{\zeta }_2=\zeta -\frac{7\zeta +\rho }{8(n-1)}-\frac{119\zeta +57\rho +3{\rho }^2}{384(n-1)^2},\\
&&Z_3=Z-\frac{3Z+R}{4(n-1)},\ \ {\zeta}_3=\zeta -\frac{3\zeta +\rho }{4(n-1)},\\
&&
Z_4=Z-\frac{3Z+R}{4(n-1)}-\frac{23Z+33R-5R^2}{96{(n-1)}^2},\ \
{\zeta }_4=\zeta -\frac{3\zeta +\rho }{4(n-1)}-\frac{23\zeta +33\rho -5{\rho }^2}{96(n-1)^2},
\end{eqnarray*}
and showed that $Z_i$, $i=1,\dots ,4$, are distributed as normal distribution with means ${\zeta }_i$, $i=1,\dots ,4$, and variances $1/(n-1).$ Therefore, we can construct four confidence intervals for $\rho$ based on these approximations. However, there is no closed form for each of these confidence intervals, and they are obtained numerically.

\subsection{Ruben's approximation}
\cite{ruben-66}
showed that
\[Z_{hr}=\frac{{\left(\frac{2n-5}{2}\right)}^{\frac{1}{2}}\tilde{R}-{\left(\frac{2n-3}{2}\right)}^{\frac{1}{2}}\tilde{\rho }}{{\left(1+\frac{1}{2}({\tilde{R}}^2+{\tilde{\rho }}^2)\right)}^{\frac{1}{2}}},\]
is asymptotically distributed as a standard normal distribution, where
$\tilde{R}=\frac{R}{\sqrt{1-R^2}}$ and
$\tilde{\rho }=\frac{\rho }{\sqrt{1-{\rho }^2}}$.
Therefore, we can construct a confidence interval for $\rho $ numerically based on this approximation.

\subsection{Muddapur's methods}
\cite{samiuddin-70}
proposed a test statistic for testing $\rho =0$ as
\[t=\frac{\sqrt{n-2}(R-\rho)}{\sqrt{\left(1-\rho^2\right)\left(1-R^2\right)}},\]
and showed that this statistic has an approximate $t$ distribution with $n-2$ degrees of freedom for moderately large $n$.
\cite{muddapur-88}
developed this test statistic as
\[t=\frac{\sqrt{n-2}(R-\rho b)}{\sqrt{(1-{\rho }^2)(1-R^2)}},\]
where $b=(S^2_1+S^2_2)/\sqrt{4S^2_1S^2_2}$. He showed that this test statistic has a $t$ distribution with $n-2$ degrees of freedom. This test statistic is a likelihood ratio test obtained by
\cite{anderson-58}
for testing the hypothesis $\rho =\rho_0$. Therefore, one can construct a confidence interval for the parameter $\rho $.

\cite{muddapur-88}
also considered the test statistic
\[f=\frac{(1+R)(1-\rho)}{(1-R)(1+\rho)},\]
and showed that it has an approximate F distribution with
$ n-2$ and $ n-2$ degrees of freedom. Note that $f$ is related to Fisher's z-transform through the one to one relationship as
\[\log(f)=2(Z-\zeta ).\]
Therefore, a $100(1-\alpha)\%$ confidence interval for $\rho $ is
\[\left(\frac{(1+F_{\alpha /2})R+(1-F_{\alpha /2})}{(1+F_{\alpha /2})+(1-F_{\alpha /2})R},\ \ \frac{(1+F_{\alpha /2})R-(1-F_{\alpha /2})}{(1+F_{\alpha /2})-(1-F_{\alpha /2})R}\right),\]
where $F_\gamma$ is the  $\gamma $th upper quantile of the F distribution with $n-2$ and $n-2$ degrees of freedom.
\cite{jeyaratnam-92}
gave another form of this confidence interval as
\[\left(\frac{R-w}{1-Rw},\frac{R+w}{1+Rw}\right),\]
where
\[w=\frac{t_{(n-2,\alpha /2)}/\sqrt{n-2}}{{\left(1+{(t_{(n-2,\alpha /2)})}^2/(n-2)\right)}^{1/2}}.\]

\subsection{Signed log likelihood method}

\cite{su-wo-07}
used the method of signed log-likelihood ratio statistic and its modification to construct two confidence intervals for $\rho$. Let
$\ell ({\boldsymbol \theta })$, where $\boldsymbol \theta =({\mu }_1,{\mu }_2,{\sigma }_1,{\sigma }_2,\rho )'$,  be the log-likelihood function of a sample from bivariate normal distribution, and suppose that $\hat{{\boldsymbol \theta }}$ be the maximum likelihood estimator of the parameter
${\boldsymbol \theta }$. Let
$\psi ({\boldsymbol \theta })$ (here $\rho $) be a scalar parameter of interest and
${\hat{{\boldsymbol \theta }}}_{\psi }$ be the constrained maximum likelihood estimator which can be obtained by maximizing
$\ell ({\boldsymbol \theta })$ subject to the constrain
$\psi ({\boldsymbol \theta })=\psi $. Thus
\[D(\psi)={\rm sign}(\hat{\psi }-\psi )\left(2\ell (\hat{{\boldsymbol \theta }})-2\ell
({\hat{{\boldsymbol \theta }}}_\psi )\right)^{1/2},\]
is the signed log-likelihood ratio statistic and asymptotically is distributed as
$N(0,1)$. Therefore, a $100(1-\alpha)\%$ confidence interval for $\psi $ is $\{\psi :|D(\psi)|<Z_{\alpha /2}\}$.

The accuracy of $D(\psi )$ is $O(n^{-\frac{1}{2}})$, and to improve the accuracy of $D(\psi)$, the modified signed log-likelihood ratio statistic was proposed by
\cite{barndorff-86,barndorff-91}.
Consider
\[D^*(\psi )=D(\psi )-\frac{1}{D(\psi )}{\log  \frac{D(\psi )}{Q(\psi )} },\]
where
\[Q(\psi )=(\hat{\psi }-\psi){\left\{\frac{\left|j_{{\boldsymbol \theta }{{\boldsymbol \theta }}'}(\hat{{\boldsymbol \theta }})\right|}{\left|j_{{\boldsymbol \lambda }{{\boldsymbol \lambda }}'}({\hat{{\boldsymbol \theta }}}_{\psi })\right|}\right\}}^{1/2},\]
and $j_{{\boldsymbol \theta }{{\boldsymbol \theta }}'}(\hat{{\boldsymbol \theta }})$ and
 $j_{{\boldsymbol \lambda }{{\boldsymbol \lambda }}'}({\hat{{\boldsymbol \theta }}}_{\psi })$ is the observed information matrix evaluated at $\hat{{\boldsymbol \theta }}$ and observed nuisance information matrix evaluated at ${\hat{{\boldsymbol \theta }}}_{\psi }$, respectively. For more details on the modified signed log-likelihood ratio statistic $D^*(\psi)$ for inference about correlation coefficient, $\rho $, reader can refer to
\cite{su-wo-07}.
Thus, a $100(1-\alpha)\%$ confidence interval for $\psi $ is $\{\psi :|D^*(\psi )|<Z_{\alpha /2}\}$. Note that the calculations  are not simple in this method.

\subsection{Krishnamoorthy and Xia's Method}

The concepts of generalized pivotal variable and generalized confidence interval are defined by
\cite{weerahandi-93}
and are used by some authors in many statistical problems. For more information about this concepts, see the book by
\cite{weerahandi-95}.

\cite{kr-xi-07}
proposed a generalized confidence interval for $\rho $ by developing a generalized pivotal variable as
\[G_{\rho 1}=\frac{\tilde{r}V_{22}-V_{21}}{\sqrt{{(\tilde{r}V_{22}-V_{21})}^2+V^2_{11}}},\]
where $\tilde{r}=\frac{r}{\sqrt{1-r^2}}$, and $V^2_{11}$, $V^2_{22}$, and $V_{21}$ are independent random variables with
${\chi }^2_{(n-1)}$, ${\chi }^2_{(n-2)}$, and $N(0,1)$, respectively. The confidence interval for $\rho $ can be constructed by using a Monte Carlo simulation. 

\subsection{Withers and Nadarajah's methods}

\cite{wi-na-12}
considered Cornish-Fisher expansions for the distribution of $Y_m=m^{\frac{1}{2}}(\hat{\theta}-\theta)a_{12}$ as
\[P^{-1}_m(t)={\Phi }^{-1}(t)+\sum^{\infty }_{j=1}{m^{-j/2}{{\rm g}}_j({\Phi }^{-1}(t))},\]
where $P_m(t)$ and $\Phi (t)$ are the distribution function of $Y_m$ and a standard normal distribution, respectively,
and ${{\rm g}}_j$ is certain polynomials in $x$. They considered
${\boldsymbol \mu } ={\boldsymbol 0}$, and therefore, gave the estimation of $\rho $ as
\[\hat{\rho }=\frac{\sum^n_{j=1}{X_{i1}X_{i2}}}{\sqrt{(\sum^n_{j=1}{X^2_{i1}})(\sum^n_{j=1}{X^2_{i2}})}}.\]
Then, using $Y_m$ with the following values
\begin{eqnarray*}
&&m=n,\ \ \ \theta =\tanh^{- 1} (\rho ),\ \ \ \hat{\theta}={\tanh }^{-1} (\hat{\rho}),\ \ \ \ \ a_{21}=1,\\
&&{{\rm g}}_1(x)=\frac{\rho }{2}+{\rho }^3\frac{(x^2-1)}{6},\ \ \ \ \
{{\rm g}}_2(x)=\frac{x^3}{12}+\frac{x}{4}-{\rho }^2\frac{x}{4}-\rho ^6\frac{(2x^3-5x)}{36},
\end{eqnarray*}
they proposed two confidence intervals for $\rho $. These confidence intervals are obtained numerically.

\subsection{Haddad and Provost's method}

Let $D^+=\sum^n_{i=1}{{(X^*_{i1}+X^*_{i2})}^2}$ and
$D^-=\sum^n_{i=1}{{(X^*_{i1}-X^*_{i2})}^2}$, where $X^*_{ij}=\frac{X_{ij}-{\bar{X}}_i}{S_i}$, $i=1,2$, $j=1,\dots ,n$, are the standard values.
\cite{ha-pr-11}
proposed an approximately $100(1-\alpha)\%$ confidence interval for $\rho$ as
\[\left(\frac{D^+-D^-F^*_{\alpha /2}}{D^++D^-F^*_{\alpha /2}},\frac{D^+-D^-F^*_{1-\alpha /2}}{D^++D^-F^*_{1-\alpha /2}}\right),\]
where $F^*_{\gamma }$ is the  $\gamma $th upper quantile of the F distribution with $n-1$ and $n-1$ degrees of freedom.

\subsection{A new generalized confidence interval}

Using the concept of generalized confidence interval, we construct a new confidence interval for the correlation coefficient parameter.
Let $A=nS$, and $a=(a_{ij})$ be an observed value of matrix $A$. Therefore $A\sim W(n-1,\Sigma )$ and also is distributed as $\sum^{n-1}_{i=1}{{{\boldsymbol Z}}^*_i{{\boldsymbol Z}}^{*'}_i}$, where ${{\boldsymbol Z}}^*_i$ has multivariate normal distribution with mean vector
${\boldsymbol \mu } ={\boldsymbol 0}$ and variance-covariance matrix $\Sigma $. Then, $a^{-\frac{1}{2}}Aa^{-\frac{1}{2}}$ is distributed as $\sum^{n-1}_{i=1}{{{\boldsymbol Z}}^{**}_i{{\boldsymbol Z}}^{**'}_i}$, where ${{\boldsymbol Z}}^{**}_i$ has multivariate normal distribution with mean vector ${\boldsymbol \mu } ={\boldsymbol 0}$\textbf{ }and variance-covariance matrix $a^{-\frac{1}{2}}\Sigma a^{-\frac{1}{2}}$. So, $a^{-\frac{1}{2}}Aa^{-\frac{1}{2}}$ follows a Wishart distribution with $n-1$ degrees of freedom and scale parameter matrix $a^{-\frac{1}{2}}\Sigma a^{-\frac{1}{2}}$, i.e. $a^{-\frac{1}{2}}Aa^{-\frac{1}{2}}\sim W(n-1,a^{-\frac{1}{2}}\Sigma a^{-\frac{1}{2}})$.
Consequently
\[V^*=(V^*_{ij})=a^{-\frac{1}{2}}{(a^{-\frac{1}{2}}\Sigma a^{-\frac{1}{2}})}^{-\frac{1}{2}}(a^{-\frac{1}{2}}Aa^{-\frac{1}{2}}){(a^{-\frac{1}{2}}\Sigma a^{-\frac{1}{2}})}^{-\frac{1}{2}}a^{-\frac{1}{2}}\sim W(n-1,a^{-1}).\]

The value of $V^*$ at $A=a$ is ${\Sigma }^{-1}$, and for a given $a$, the distribution of $V^*$ does not depend on any unknown parameters. Therefore, $V^*$ is a generalized pivotal variable for ${\Sigma }^{-1}$, and
$$V^{*-1}=\frac{1}{V^*_{11}V^*_{22}-{(V^*_{12})}^2}
\left[ \begin{array}{cc}
V^*_{22} & -V^*_{12} \\
-V^*_{12} & V^*_{11} \end{array}
\right],$$
is a generalized pivotal variable for $\Sigma $. So,
\begin{equation}\label{eq.G2}
G_{\rho 2}=\frac{-V^*_{12}}{\sqrt{V^*_{11}V^*_{22}}},
\end{equation}
is a generalized pivotal variable for the parameter $\rho $. By using the Monte Carlo simulation given in the following algorithm, we can find a confidence interval for $\rho $.

\begin{algorithm}

For given sample covariance matrix, $s$, \\
\textbf{Step 1}. Compute $a$ and $a^{-1}$.\\
\textbf{Step 2}. Generate $V^*\sim W(n-1,a^{-1})$.\\
\textbf{Step 3.} Compute $G_{\rho 2}$ in \eqref{eq.G2}.\\
\textbf{Step 4}. Repeat Step 2 and 3 for a large number of times (say $M=10,000$)\\
Then from these $M$ values, the $100(\alpha /2)$th and $100(1-\alpha /2)$th percentile of $G_{\rho 2}$ is a $100(1-\alpha)\%$ confidence interval for $\rho $.
\end{algorithm}

\subsection{A parametric bootstrap method}

The bootstrap approach is a computer-based method that is applied on the observed data by Monte Carlo simulation
\cite{ef-ti-94}.
Using the PB approach, one can approximate the null distribution of some statistical tests.
This approach was used by some authors in well-known problems like the Behrens-Fisher problem
 \cite{ch-pa-08-behrens},
comparing several normal means
\cite{ch-pa-li-li-10},
ANCOVA with unequal variances
\cite{sa-ja-13},
the equality of coefficients of variation
\cite{ja-ka-13},
and the equality of two log-normal means
\cite{ja-ab-14}.
 Here, we propose a PB confidence interval for $\rho$ using the z-transformation in \eqref{eq.zt}.

\begin{lemma} Let $R$ be the sample correlation coefficient for a bivariate normal distribution with mean vector
${\boldsymbol \mu }$ and variance-covariance matrix $\Sigma $. Then
\[R\sim \frac{\tilde{\rho} V+N}{\sqrt{(\tilde{\rho} V+N)^2+W^2}},\]
where
$\tilde{\rho}=\frac{\rho}{\sqrt{1-{\rho }^2}}$, and
 $V^2$, $W^2$, and $N$ are independent random variables with ${\chi }^2_{(n-1)}$, ${\chi }^2_{(n-2)}$, and $N\left(0,1\right)$.
\end{lemma}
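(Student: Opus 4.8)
The plan is to build on the fact that $R$ is invariant under the location--scale maps $X_{ij}\mapsto c_iX_{ij}+d_i$ with $c_i>0$, so the distribution of $R$ depends on $\Sigma$ only through $\rho$. Accordingly I would first reduce to the standardized case $\boldsymbol\mu=\boldsymbol 0$ and $\sigma_1=\sigma_2=1$, so that $\Sigma$ has unit diagonal and off--diagonal entry $\rho$. Writing the regression of the second coordinate on the first, I would represent the sample by $X_{2j}=\rho X_{1j}+\sqrt{1-\rho^2}\,\varepsilon_j$, where $\varepsilon_1,\dots,\varepsilon_n$ are i.i.d.\ $N(0,1)$ and independent of $X_{11},\dots,X_{1n}$. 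With $a_{ij}=nS_{ij}$ and after centring, this yields $a_{12}=\rho a_{11}+\sqrt{1-\rho^2}\sum_{j}(X_{1j}-\bar X_1)\varepsilon_j$ together with the matching expansion of $a_{22}$, while $R=a_{12}/\sqrt{a_{11}a_{22}}$.

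The crux of the argument is a conditioning step. Conditioning on $X_{11},\dots,X_{1n}$, I would analyze the centred error vector $\boldsymbol e=(\varepsilon_1-\bar\varepsilon,\dots,\varepsilon_n-\bar\varepsilon)$, which lies in the $(n-1)$--dimensional subspace orthogonal to $\boldsymbol 1$, and split it into its component along the centred covariate vector $\boldsymbol x=(X_{11}-\bar X_1,\dots,X_{1n}-\bar X_1)$ and its component orthogonal to both $\boldsymbol 1$ and $\boldsymbol x$. Setting $N=\sum_{j}(X_{1j}-\bar X_1)\varepsilon_j/\sqrt{a_{11}}$ and letting $W^2=\|\boldsymbol e\|^2-N^2$ be the squared length of the orthogonal remainder, a Cochran--type decomposition of the Gaussian vector gives, conditionally on the $X_{1j}$, that $N\sim N(0,1)$ and $W^2\sim\chi^2_{n-2}$ are independent. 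Since these conditional laws do not involve the conditioning variables, $(N,W)$ is in fact independent of $X_{11},\dots,X_{1n}$, hence of $a_{11}$.

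Substituting $\sum_{j}(X_{1j}-\bar X_1)\varepsilon_j=\sqrt{a_{11}}\,N$ and $\|\boldsymbol e\|^2=N^2+W^2$, I would recast the two sums as $a_{12}=\sqrt{a_{11}}\,(\rho\sqrt{a_{11}}+\sqrt{1-\rho^2}\,N)$ and $a_{22}=(\rho\sqrt{a_{11}}+\sqrt{1-\rho^2}\,N)^2+(1-\rho^2)W^2$. Forming $R=a_{12}/\sqrt{a_{11}a_{22}}$ and cancelling the common factor $\sqrt{1-\rho^2}$ from numerator and denominator then gives
\[R=\frac{\tilde\rho\sqrt{a_{11}}+N}{\sqrt{(\tilde\rho\sqrt{a_{11}}+N)^2+W^2}},\qquad \tilde\rho=\frac{\rho}{\sqrt{1-\rho^2}}.\]
Because $X_{11},\dots,X_{1n}$ are i.i.d.\ $N(0,1)$ we have $a_{11}=\sum_{j}(X_{1j}-\bar X_1)^2\sim\chi^2_{n-1}$; setting $V=\sqrt{a_{11}}$ produces exactly the stated representation, and the mutual independence of $V$, $N$, $W$ follows from the conditioning argument, which shows $(N,W)$ independent of $a_{11}$ and $N$ independent of $W$.

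I expect the main obstacle to be the clean execution of this Cochran--type decomposition: verifying that the component of $\boldsymbol e$ along $\boldsymbol x$ and its orthogonal complement carry exactly one and $n-2$ degrees of freedom, that they are independent, and that their joint law is free of the conditioning values (so that the final independence from $V$ is genuine). Once that is settled the remaining steps are only the bookkeeping of the regression expansion and routine algebraic factoring.
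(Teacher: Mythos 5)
Your proof is correct, but it reaches the representation by a different route than the paper. The paper invokes the Bartlett decomposition from Muirhead (1982, p.~99): writing $\Sigma=LL'$ with $L$ the lower-triangular Cholesky factor, the Wishart matrix $A=nS$ is distributed as $LCC'L'$ with $C=\bigl(\begin{smallmatrix}V&0\\ N&W\end{smallmatrix}\bigr)$, and the lemma drops out by multiplying out $A$ and forming $A_{12}/\sqrt{A_{11}A_{22}}$; no reduction to the standardized case is needed because the scale factors $\sigma_1$, $\sigma_2$, $\sqrt{1-\rho^2}$ cancel explicitly in the ratio. You instead first use location--scale invariance of $R$ to standardize, then rebuild the triple $(V,N,W)$ from scratch via the conditional regression $X_{2j}=\rho X_{1j}+\sqrt{1-\rho^2}\,\varepsilon_j$ and a Cochran-type splitting of the centred error vector into its component along the centred covariate and the orthogonal remainder. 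These are really two faces of the same decomposition --- your conditioning argument is essentially the standard proof of the Bartlett result you would otherwise cite --- so your version is more self-contained and elementary, at the cost of having to verify the degrees-of-freedom count, the conditional independence, and the fact that the conditional laws of $(N,W)$ are free of the $X_{1j}$ (all of which you correctly flag and which do go through, the degenerate event $\sum_j(X_{1j}-\bar X_1)^2=0$ having probability zero). The paper's version is shorter and handles general $\boldsymbol\mu$ and $\Sigma$ in one stroke by outsourcing exactly this work to a textbook theorem. Your algebra checks out: $a_{12}=\sqrt{a_{11}}\,(\rho\sqrt{a_{11}}+\sqrt{1-\rho^2}\,N)$ and $a_{22}=(\rho\sqrt{a_{11}}+\sqrt{1-\rho^2}\,N)^2+(1-\rho^2)W^2$ match the paper's entries of $A$ in the standardized case, and cancelling $\sqrt{1-\rho^2}$ gives the stated form with $V=\sqrt{a_{11}}$, $V^2\sim\chi^2_{n-1}$.
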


\begin{proof} Since $\Sigma $  is a positive definite matrix, there is a unique lower triangular matrix, $L$, such that $LL'=\Sigma $ (Cholesky decomposition). From
\cite{muirhead-82},
page 99,
\[LCC'L'=A\sim W\left(n-1,\Sigma \right),\]
where
$A=(A_{ij})=nS$, $C=\left[ \begin{array}{cc}
V & 0 \\
N & W \end{array}
\right]$
 and $V^2$, $W^2$, and $N$ are independent random variables with ${\chi }^2_{(n-1)}$, ${\chi }^2_{(n-2)}$, and $N\left(0,1\right)$, respectively.
It can be shown that matrix $L$ has the following form:
\[L=\left[ \begin{array}{cc}
{\sigma }_1 & 0 \\
\rho {\sigma }_2 & {\sigma }_2\sqrt{1-{\rho }^2} \end{array}
\right].\]
Therefore, we have
\[A=\left[
\begin{array}{cc}
{{\sigma }^2_1V}^2 & {{\sigma }_1\sigma_2}{\sqrt{1-\rho ^2}}(\tilde{\rho} V^2+NV) \\
{{\sigma }_1\sigma_2}{\sqrt{1-\rho ^2}}(\tilde{\rho} V^2+NV) &
{{\sigma }^2_2}{(1-{\rho }^2)}[(\tilde{\rho} V+N)^2+W^2]
\end{array}
\right].\]
So, the sample correlation coefficient is distributed as $ \frac{A_{12}}{\sqrt{A_{11}A_{22}}}$, and the proof is completed.
\end{proof}

Consider
$Q={(Z-{{\tanh }^{ - 1} (\rho )})}^2$.
We can approximate the distribution of $Q$ 
using a PB approach as
\begin{equation}\label{eq.QB}
Q^B={(Z^B-\tanh ^{-1} (R))}^2,
\end{equation}
where $Z^B=\frac{1}{2}{\log (\frac{1+R^B}{1-R^B}) }$ with
\begin{equation}\label{eq.RB}
R^B\sim \frac{\tilde{R} V+N}{\sqrt{(\tilde{R} V+N)^2+W^2}},
\end{equation}
where
$\tilde{R}=\frac{R}{\sqrt{1-{R }^2}}$.
Then, the distribution of $Q^B$ provides the PB approximation for the distribution of $Q$,  and a $100(1-\alpha )\%$ PB confidence interval for $\rho $ is
\[\left(\tanh  (Z-\sqrt{q^B_{\alpha }}) , \tanh  (Z+\sqrt{q^B_{\alpha }})\right)\]
where $q^{B}_{\alpha }$ denotes the $(1-\alpha )$th quantile of the distribution of $Q^B$. The value of $q^B_{\alpha }$ can be estimated using a Monte Carlo simulation as follows:

\begin{algorithm} For given sample correlation coefficient, $R=r$,\\
\textbf{Step 1}. Generate artificial sample correlation coefficients $R^B$ in \eqref{eq.RB} and compute $Q^B$ in \eqref{eq.QB}.\\
\textbf{Step 2}. Repeat Step 2 for a large number of times (say $M=10,000$), and from these $M$ values, obtain the empirical distribution of $Q^B$ and its $(1-\alpha )$th quantile as an estimate of $q^B_{\alpha }$.
\end{algorithm}

\begin{remark}
We used the Fisher z-transformation quantity to construct the PB confidence interval for $\rho $. In addition, one can use quantities in Hotelling, Ruben, Muddapur methods, Jeyaratnam, Sign log likelihood, and Withers and Nadarajah methods, and propose other PB confidence intervals.
\end{remark}

\section{Simulation Studies}
\label{sec.sim}

For evaluating the performance of the methods for constructing confidence
intervals for $\rho $, we compare the coverage probabilities and expected lengths using simulation studies. In all cases, we consider 95\% confidence coefficient, and generate 10,000 random samples with size $n$ from a bivariate normal distribution with
 $(\mu_1,\mu_2)=(0,0)$, $(\sigma_1,\sigma_2)=(1,1)$, and $\rho =0.0, 0.1, \dots , 0.9$.

  With respect to different values of $\rho$, the coverage probabilities and expected lengths  are plotted   in Figures \ref{fig.cov}  and \ref{fig.len}, respectively, and we can conclude that:\\
i) For all $n$ and $\rho $, the coverage probabilities of the exact method, Fisher's z-transformation method, Krishnamoorthy's generalized confidence interval, Withers and Nadarajah' methods, and PB confidence interval are close to the confidence coefficient.\\
ii) The coverage probability and expected length of the PB method are close to the coverage probability and expected length of Fisher's z-transformation method. However, PB method is applicable for $n=3$ but Fisher's z-transformation method is not.\\
iii) When the sample size is small the coverage probabilities of Ruben, and Haddad and Provost' methods are larger than the confidence coefficient, and are satisfactory for other sample sizes.\\
iv) The coverage probabilities of confidence intervals based on modified signed log-likelihood ratio and Hotellings' $Z_i$ are satisfactory when $n\ge 10$.\\
v) There are a few situations that, the coverage probabilities of our generalized pivotal approach is less than the confidence coefficient.\\
vi) The expected lengths of all confidence intervals become small when the sample size $n$ or the parameter $\rho $ becomes large.

Also, we performed a simulation study of robustness of the confidence intervals for $\rho $. We consider two cases: First, 10,000 random samples with size $n$ are generated from a bivariate t distribution with 5 degrees of freedom and parameters
$(\mu_1,\mu_2)=(1,2)$, $(\sigma_1,\sigma_2)=(1,3)$,
and $\rho =0,0.6$. The results are given in Table \ref{tab.simt} and we can conclude that the coverage probabilities of all confidence intervals are not satisfactory, and are smaller than the confidence coefficient.

In second robustness study, 10,000 random samples with size $n$ are generated from a bivariate log-normal distribution. A random variable
${\boldsymbol X} =(X_1,X_2)'$
has a bivariate log-normal distribution with parameters
${\boldsymbol \mu }$ and $\Sigma $ if
${\boldsymbol Y} =(\log  (X_1) ,\log  (X_2))'$
has the bivariate normal distribution with mean ${\boldsymbol \mu }$ and covariance matrix $\Sigma $. Note that the correlation coefficient for a bivariate log-normal distribution equals to
\[{\rho }^*=\frac{{\exp  (\rho {\sigma }_1{\sigma }_2)}-1}{\sqrt{(\exp  ({\sigma }^2_1)-1)(\exp  ({\sigma }^2_2) -1)}},\]
and the coverage probability of a confidence interval is the number the cases that the parameter  ${\rho }^*$ lies within the confidence interval. Note that $\rho^*$ and  $\rho $ are close when ${\sigma }_1$ and ${\sigma }_2$ both are small. Here, we consider
$({\mu }_1,{\mu }_2)=(1,2)$, $({\sigma }_1,{\sigma }_2)=(0.1,0.1)$, $\rho =0,0.6$. Therefore, $\rho^*=0, 0.5988$.  The results are given in Table \ref{tab.simlog}, and we can conclude the similar results to normal case. Note that when ${\sigma }_i$'s are not equal or when ${\sigma }_i$'s are large, all confidence intervals are unsatisfactory, and their coverage probabilities are smaller than the confidence coefficient.

\section{Two numerical examples}
\label{sec.ex}

In this Section, we illustrate the seventeen confidence intervals for the correlation coefficient $\rho $ using two real examples.

\begin{example}
\cite{levine-99}
 studied the role of nonexercise activity thermogenesis in resistance to fat gain in humans and obtained the following data set. They considered two variables. $X_1$: the increase in energy use (in cal) from activity other than deliberate exercise, and $X_2$: the fat gain (in kg). This data set is also studied by
 \cite{su-wo-07}.
 The sample correlation coefficient equals to -0.7786. The 95\% confidence intervals are given in Table \ref{tab.ex}. We can find that the confidence intervals based on exact method, the confidence interval based on Modified signed log-likelihood ratio
  (proposed by \cite{su-wo-07}),
  and generalized confidence interval
  (proposed by \cite{kr-xi-07})
  are the same.
\end{example}

\begin{example}

The source of the data set is National Center for Education Statistics (http: //nces.ed.gov).
\cite{ch-li-pa-08}
also studied this. The Trial Urban District Assessment (TUDA) is a special project under the US National Assessment of Educational Progress (NAEP). It began assessing performance in selected large urban districts in 2002 and eleven urban school districts participated in the TUDA at grades 4 and 8 in 2005. The sample correlation coefficients between score in Mathematics and Reading for grades 4 and 8 equal to 0.9755 and 0.9738, respectively. The 95\% confidence intervals using different methods for two grades are given in Table \ref{tab.ex}. We can find that the confidence interval based on exact method, and the generalized confidence interval
(proposed by \cite{kr-xi-07})
are same in each grade score.

\end{example}

\begin{figure}
\centering
\includegraphics[scale=0.56]{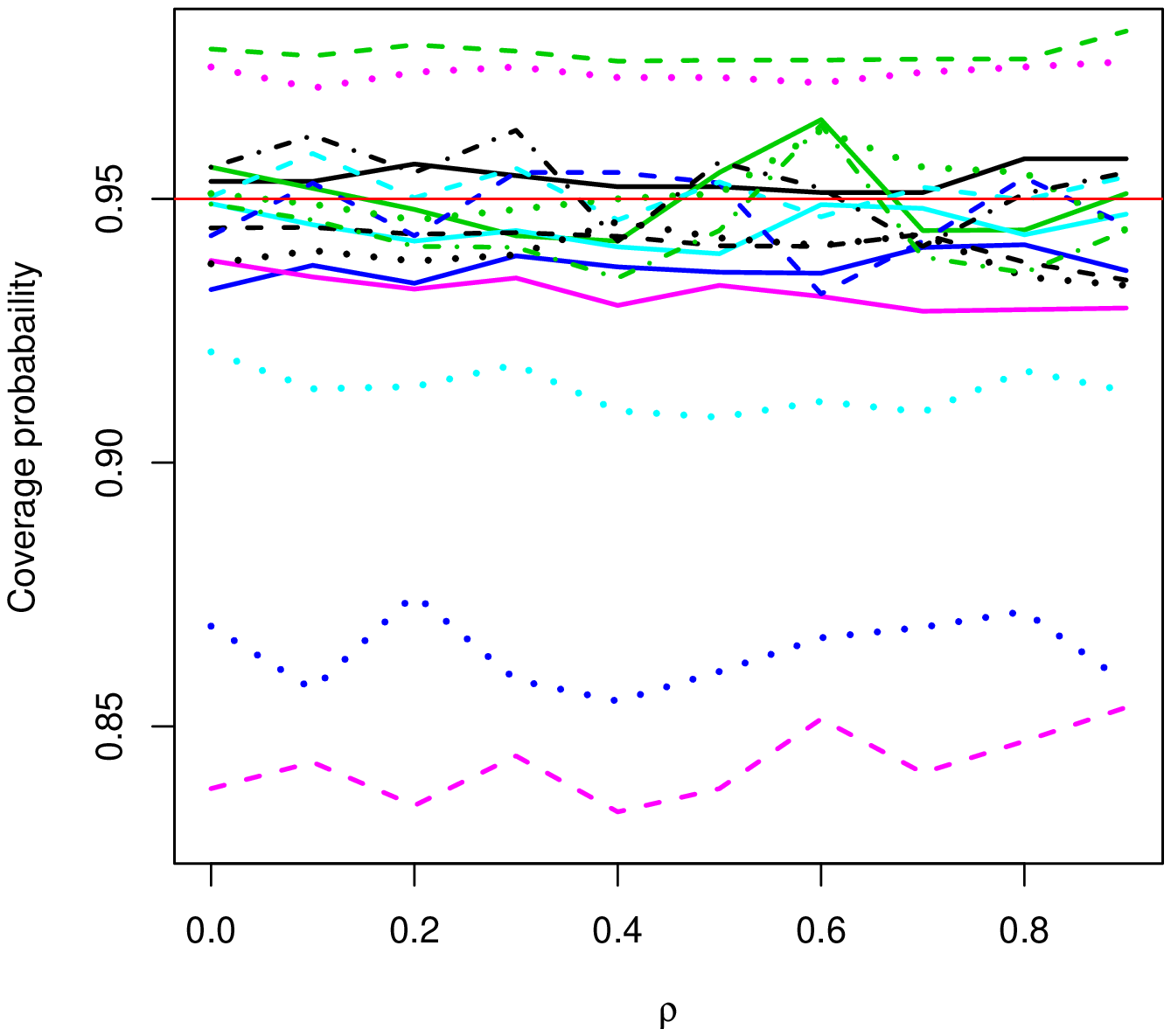}
\includegraphics[scale=0.56]{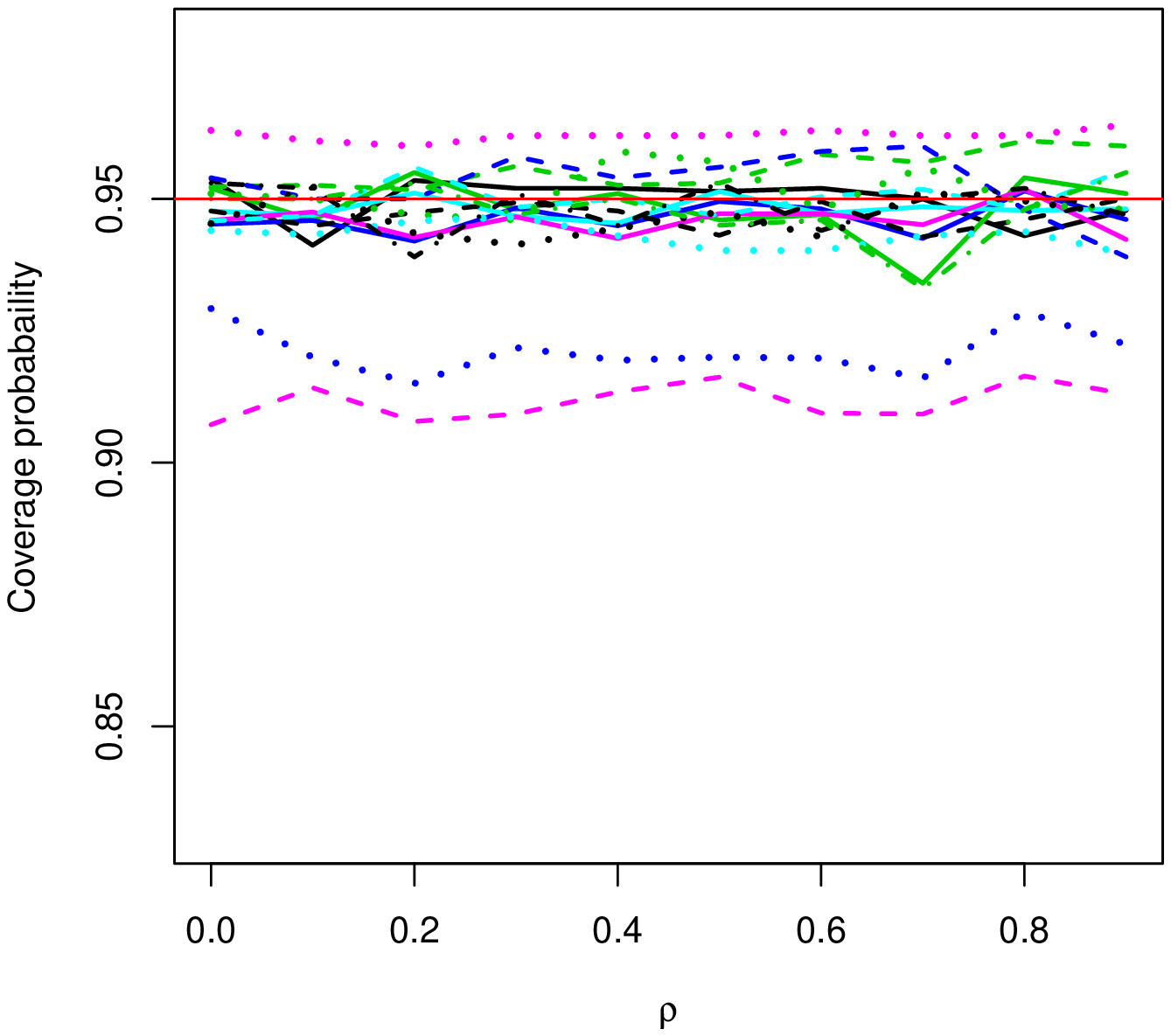}
\includegraphics[scale=0.56]{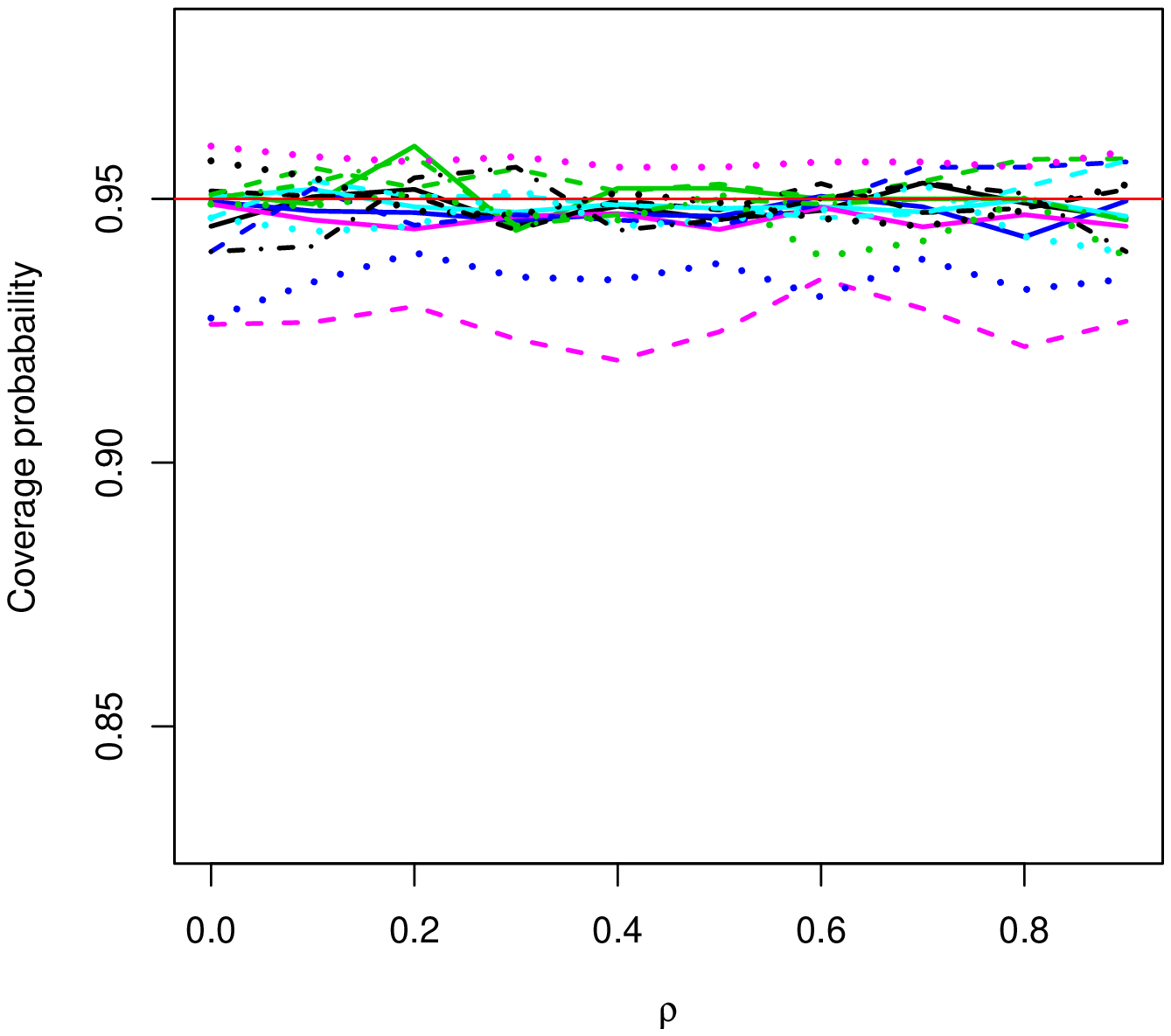}
\includegraphics[scale=0.56]{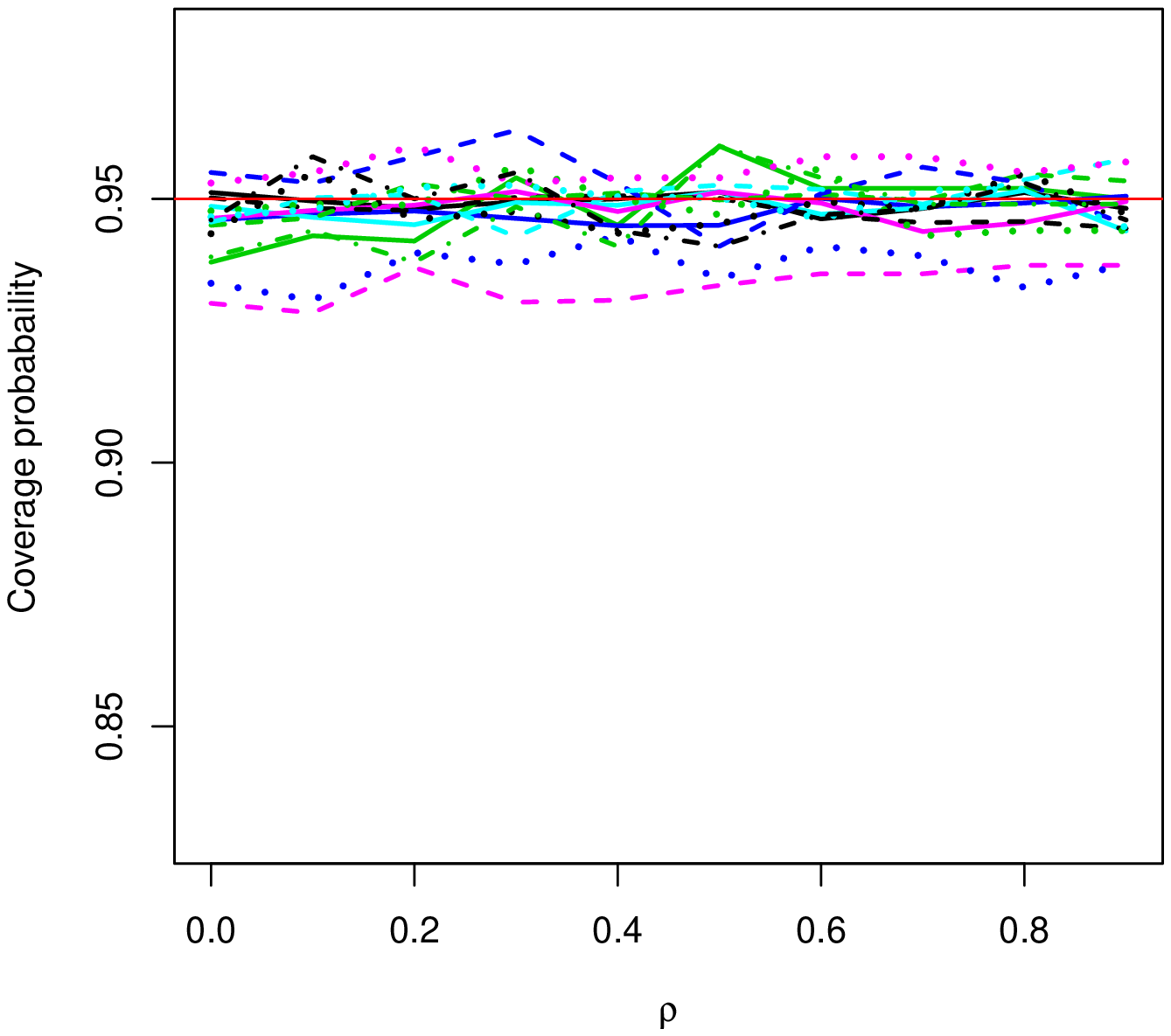}
\includegraphics[scale=0.7]{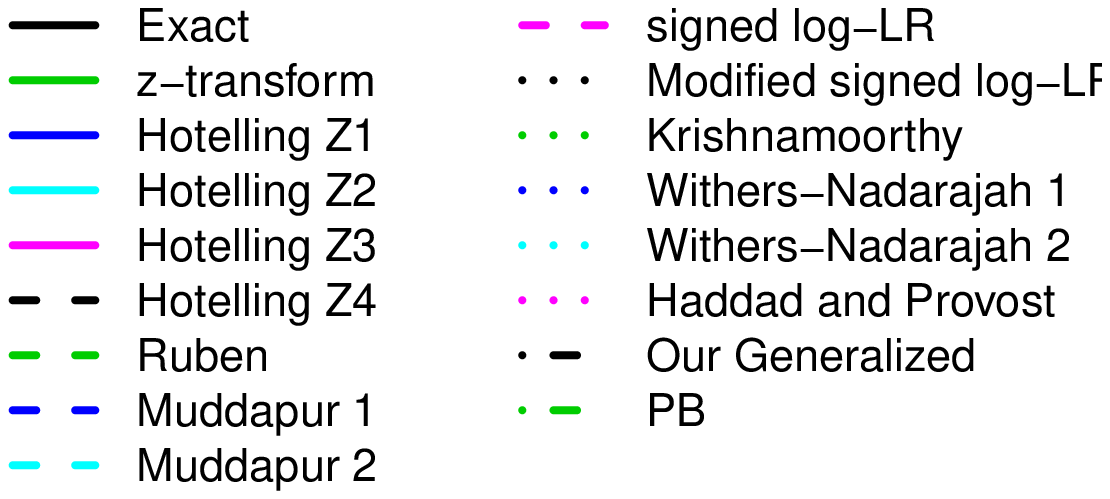}
\vspace{-5cm}
\caption[]{The coverage probabilities of confidence intervals for $\rho$ with $n=5$ (top left), $n=10$ (top right), $n=15$ (bottom left), and $n=20$ (bottom right).}\label{fig.cov}
\end{figure}

\begin{figure}
\centering
\includegraphics[scale=0.56]{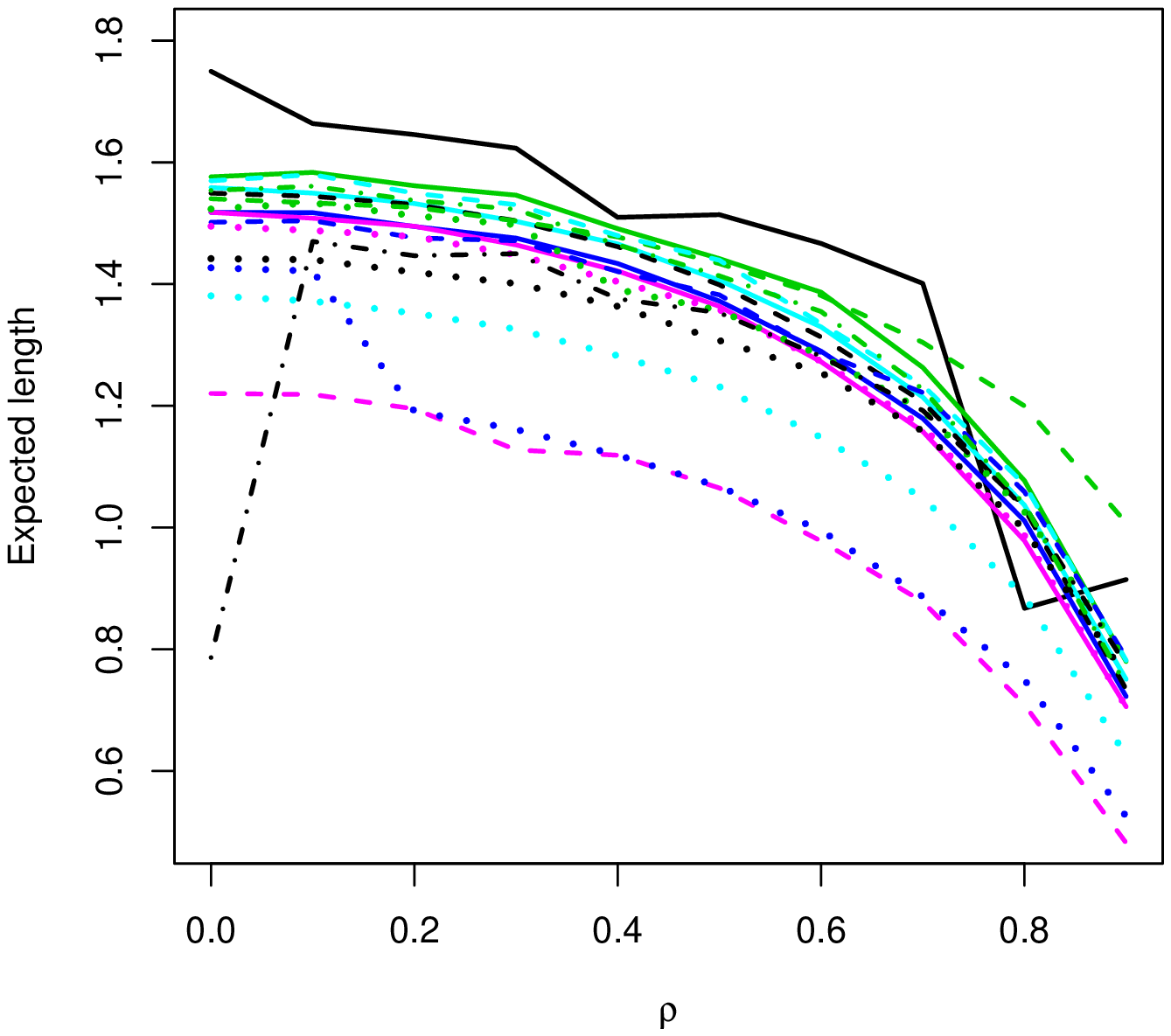}
\includegraphics[scale=0.56]{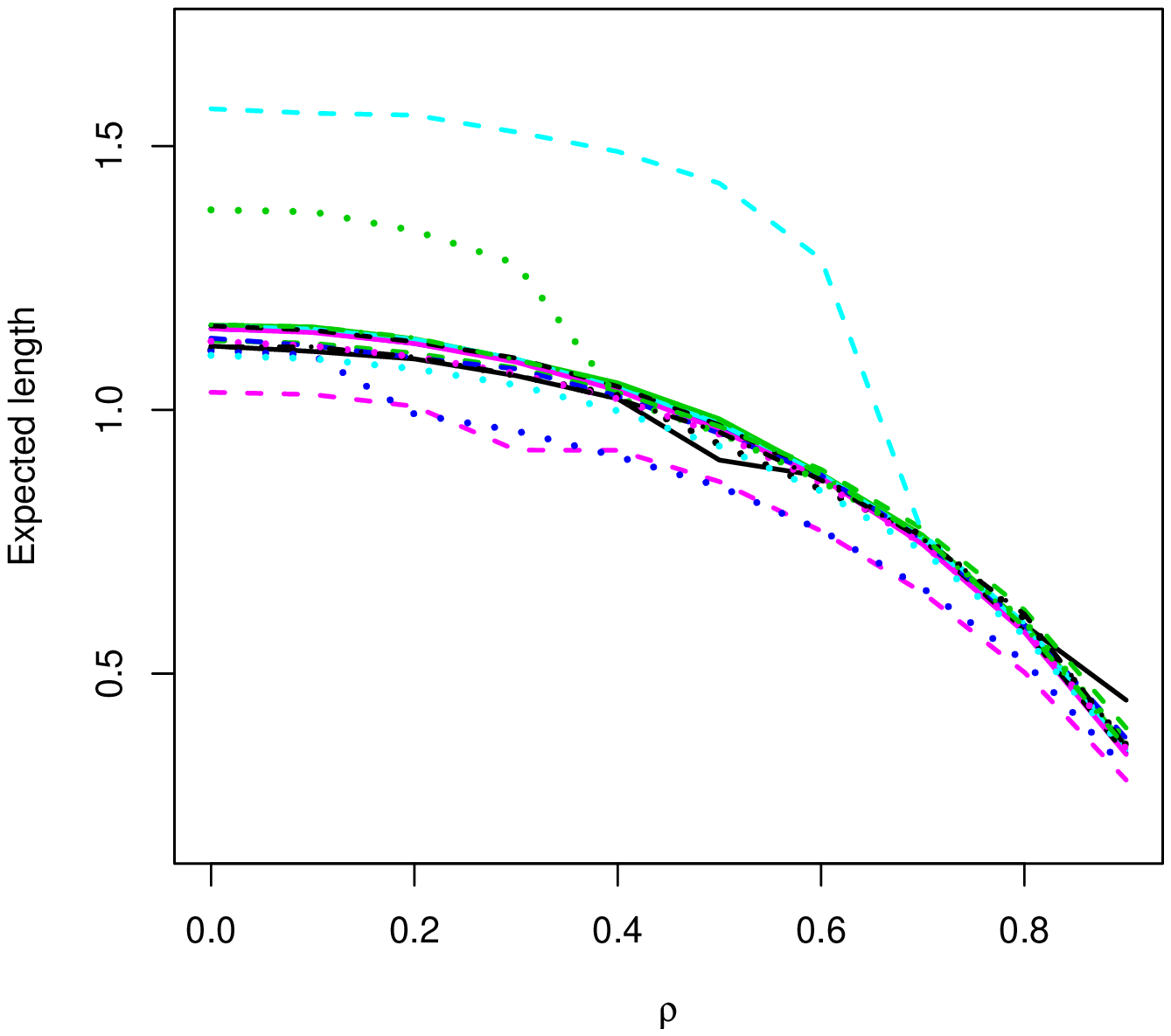}
\includegraphics[scale=0.56]{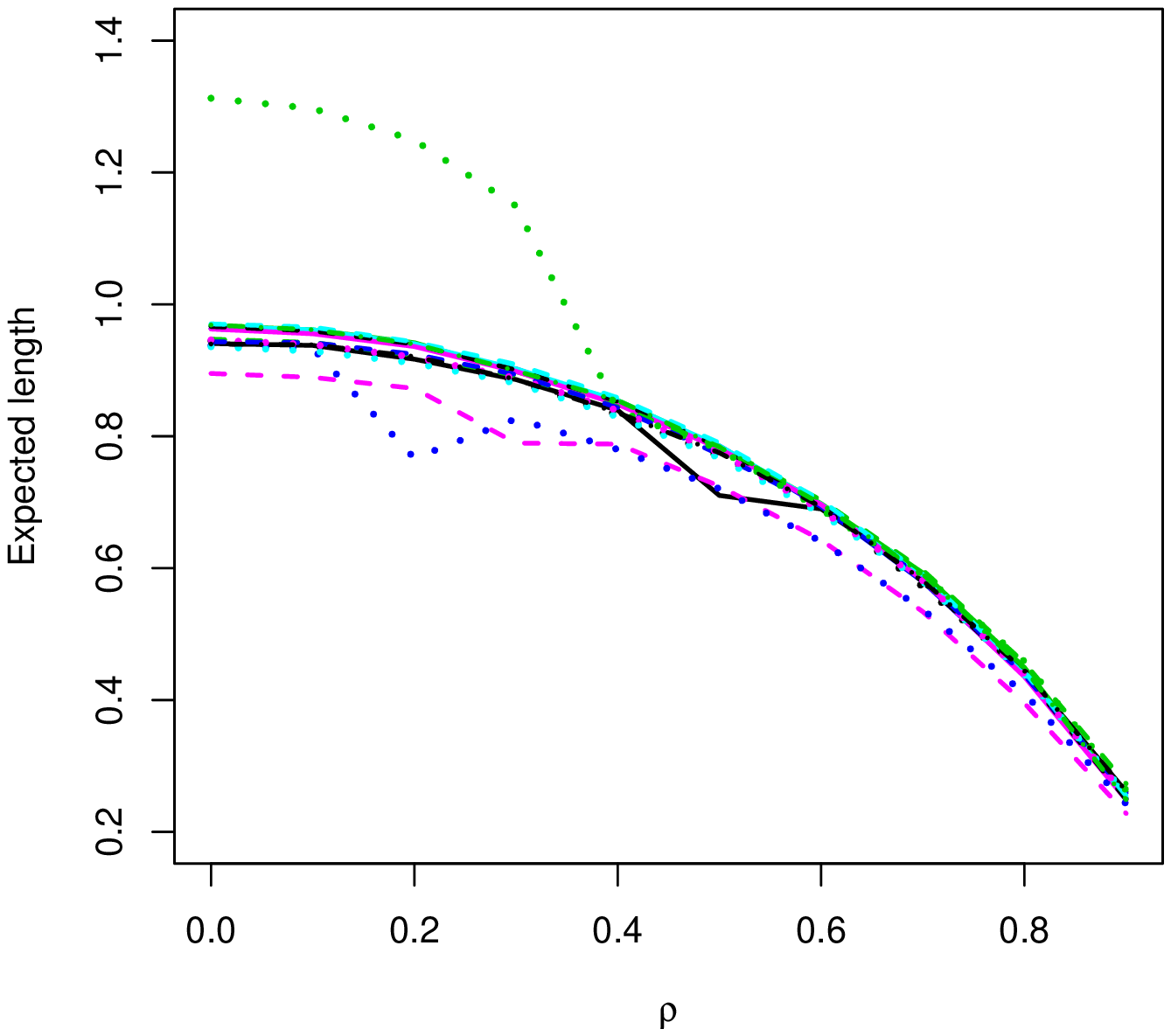}
\includegraphics[scale=0.56]{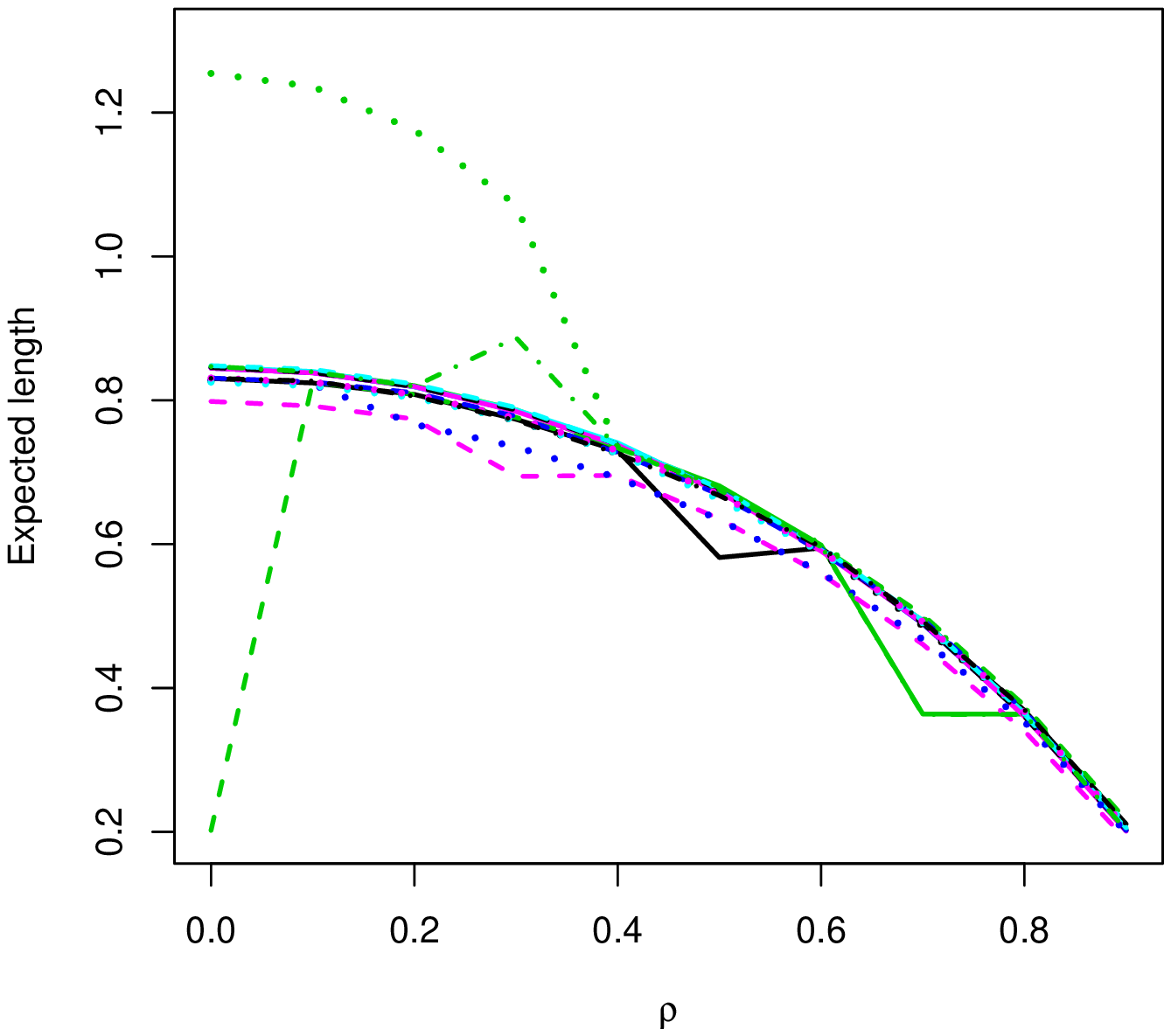}
\includegraphics[scale=0.7]{legned.eps}
\vspace{-5cm}
\caption{The expected length of confidence intervals for $\rho$ with $n=5$ (top left), $n=10$ (top right), $n=15$ (bottom left), and $n=20$ (bottom right).}\label{fig.len}
\end{figure}

\begin{table}
\begin{center}
\caption{The coverage probabilities and expected lengths of the 95\% confidence intervals (generated from a bivariate t distribution).
}\label{tab.simt}

\begin{tabular}{|l|c|c|c|c||c|c|c|c|} \hline
 & \multicolumn{8}{|c|}{Coverage Probability} \\ \hline
 & \multicolumn{4}{|c||}{$\rho =0$} & \multicolumn{4}{|c|}{$\rho =0.6$} \\ \hline
Method        \qquad   \qquad \qquad               $n$ & 3 & 5 & 10 & 25 & 3 & 5 & 10 & 25 \\ \hline
Exact  & --- & 0.9278 & 0.8998 & 0.8708 & --- & 0.9304 & 0.9106 & 0.8760 \\
z -transform & --- & 0.9285  & 0.8981  & 0.8697  & --- & 0.9355  & 0.9035  & 0.8747  \\
Hotelling $Z_1$ & 0.8977 & 0.9142 & 0.8938 & 0.8688 & 0.8991 & 0.9108 & 0.9033 & 0.8727 \\
Hotelling $Z_2$ & 0.9368 & 0.9188 & 0.8974 & 0.8709 & 0.9276 & 0.9287 & 0.8987 & 0.8718 \\
Hotelling $Z_3$ & 0.8761 & 0.9087 & 0.8935 & 0.8757 & 0.8722 & 0.9050 & 0.8987 & 0.8681 \\
Hotelling $Z_4$ & 0.9196 & 0.9177 & 0.9025 & 0.8695 & 0.9112 & 0.9165 & 0.8970 & 0.8648 \\
Ruben  & --- & 0.9710 & 0.9029 & 0.8671 & --- & 0.9651 & 0.9128 & 0.8776 \\
Muddapur 1 & 0.9380 & 0.9225 & 0.8951 & 0.8499 & 0.9263 & 0.8400 & 0.6404 & 0.3113 \\
Muddapur 2  & 0.9426 & 0.9283 & 0.9011 & 0.8716 & 0.9442 & 0.9309 & 0.9071 & 0.8704 \\
signed log-LR & 0.6396 & 0.7902 & 0.8486 & 0.8514 & 0.6436 & 0.7952 & 0.8515 & 0.8610 \\
Modified signed log-LR & 0.8741 & 0.9084 & 0.9011 & 0.8714 & 0.9212 & 0.9232 & 0.9124 & 0.8862 \\
Krishnamoorthy & 0.9450 & 0.9267 & 0.8997 & 0.8716 & 0.9422 & 0.9145 & 0.8248 & 0.5483 \\
Withers-Nadarajah 1  & 0.8018  & 0.7877  & 0.6810  & 0.4378  & 0.8302  & 0.8368  & 0.8036  & 0.6626  \\
Withers-Nadarajah 2  & 0.8903  & 0.8636  & 0.7409  & 0.4521  & 0.9023  & 0.8894  & 0.8326  & 0.6929  \\
Haddad and Provost & 0.9837  & 0.9545  & 0.9232  & 0.8791  & 0.9803  & 0.9564  & 0.9200  & 0.8778  \\
Our Generalized method & 0.7730 & 0.8710 & 0.8770 & 0.8480 & 0.7621 & 0.8740 & 0.8970 & 0.8390 \\
PB & 0.9342  & 0.9215 & 0.8972 & 0.8711 & 0.9362  & 0.9249 & 0.9029 & 0.8748 \\ \hline
 & \multicolumn{8}{|c|}{Expected Length} \\ \hline
 & \multicolumn{4}{|c||}{$\rho =0$} & \multicolumn{4}{|c|}{$\rho =0.6$} \\ \hline
Method              \qquad   \qquad \qquad            $n$ & 3 & 5 & 10 & 25 & 3 & 5 & 10 & 25 \\ \hline
Exact  & --- & 1.4362 & 1.0880 & 0.7334 & --- & 1.2428 & 0.8490 & 0.5185 \\
z-transform & --- & 1.5445 & 1.1299 & 0.7438 & --- & 1.3283 & 0.8632 & 0.5187 \\
Hotelling $Z_1$ & 1.7071 & 1.4865 & 1.1192 & 0.7426 & 1.6084 & 1.2491 & 0.8583 & 0.5179 \\
Hotelling $Z_2$ & 1.8213 & 1.5156 & 1.1241 & 0.7435 & 1.7364 & 1.2902 & 0.8607 & 0.5193 \\
Hotelling $Z_3$ & 1.6664 & 1.4719 & 1.1160 & 0.7443 & 1.5431 & 1.2343 & 0.8530 & 0.5182 \\
Hotelling $Z_4$ & 1.7900 & 1.5089 & 1.1262 & 0.7430 & 1.6942 & 1.2893 & 0.8581 & 0.5161 \\
Ruben  & --- & 1.5166 & 1.1024 & 0.7343 & --- & 1.3625 & 0.8643 & 0.5247 \\
Muddapur 1 & 1.7142 & 1.2177 & 0.8026 & 0.4865 & 1.6554 & 1.0546 & 0.6302 & 0.3644 \\
Muddapur 2  & 1.8324 & 1.5343 & 1.1330 & 0.7462 & 1.7697 & 1.3179 & 0.8711 & 0.5201 \\
signed log-LR & 1.1123 & 1.1799 & 1.0021 & 0.7100 & 0.9457 & 0.9447 & 0.7532 & 0.4929 \\
Modified signed log-LR & 1.5857 & 1.3922 & 1.0805 & 0.7324 & 1.3254 & 1.2196 & 0.8862 & 0.5194 \\
Krishnamoorthy & 1.6639 & 1.5240 & 1.3318 & 1.1518 & 1.6016 & 1.2854 & 0.9045 & 0.4731 \\
Withers-Nadarajah 1  & 1.2721 & 1.1393 & 0.8989 & 0.6179 & 1.0079 & 0.8280 & 0.5974 & 0.3621 \\
Withers-Nadarajah 2  & 1.5227 & 1.3049 & 0.9814 & 0.6420 & 1.2642 & 0.9699 & 0.6436 & 0.3783 \\
Haddad and Provost & 1.8998 & 1.6225 & 1.0306 & 0.7696 & 1.8483 & 0.9536 & 0.6005 & 0.6725 \\
Our Generalized method & 1.5054 & 1.4429 & 1.1180 & 0.7401 & 1.3285 & 1.1935 & 0.8346 & 0.5149 \\
PB & 1.8111 & 1.5240 & 1.1325 & 0.7453 & 1.7330 & 1.2989 & 0.8631 & 0.5191 \\ \hline
\end{tabular}
\end{center}
\end{table}

$ \ $

\newpage

\begin{table}
\begin{center}
\caption{ The coverage probabilities and expected lengths of the 95\% confidence intervals  (generated from a bivariate log-normal distribution).}\label{tab.simlog}
\begin{tabular}{|l|c|c|c|c||c|c|c|c|} \hline
 & \multicolumn{8}{|c|}{Coverage Probability} \\ \hline
 & \multicolumn{4}{|c||}{$\rho =0$} & \multicolumn{4}{|c|}{$\rho =0.6$} \\ \hline
Method           \qquad   \qquad \qquad                $n$ & 3 & 5 & 10 & 25 & 3 & 5 & 10 & 25 \\ \hline
Exact  & --- & 0.9478 & 0.9468 & 0.9471 & --- & 0.9506 & 0.9495 & 0.9456 \\
z -transform & --- & 0.9173 & 0.9253 & 0.9387 & --- & 0.7492  & 0.6497  & 0.5399  \\
Hotelling $Z_1$ & 0.8765 & 0.9050 & 0.9318 & 0.9420 & 0.7804 & 0.7287 & 0.6543 & 0.5369 \\
Hotelling $Z_2$ & 0.9254 & 0.9131 & 0.9299 & 0.9415 & 0.8578 & 0.7398 & 0.6566 & 0.5454 \\
Hotelling $Z_3$ & 0.8694 & 0.9097 & 0.9265 & 0.9392 & 0.7566 & 0.7124 & 0.6577 & 0.5426 \\
Hotelling $Z_4$ & 0.8947 & 0.9129 & 0.9284 & 0.9386 & 0.8227 & 0.726 & 0.6478 & 0.5483 \\
Ruben  & --- & 0.9554 & 0.9281 & 0.9386 & --- & 0.8270 & 0.6676 & 0.5495 \\
Muddapur 1 & 0.9254 & 0.9006 & 0.8861 & 0.8558 & 0.8081 & 0.5984 & 0.4076 & 0.1587 \\
Muddapur 2  & 0.9380 & 0.9236 & 0.9326 & 0.9420 & 0.8807 & 0.7421 & 0.6554 & 0.5456 \\
signed log-LR & 0.6587 & 0.8402 & 0.9150 & 0.9393 & 0.6608 & 0.8399 & 0.9117 & 0.9320 \\
Modified signed log-LR & 0.8954 & 0.9271 & 0.9325 & 0.9401 & 0.8673 & 0.9154 & 0.9472 & 0.9353 \\
Krishnamoorthy & 0.9350 & 0.9151 & 0.9256 & 0.9387 & 0.8627 & 0.7437 & 0.6447 & 0.5210 \\
Withers-Nadarajah 1  & 0.7936  & 0.8263  & 0.8400  & 0.8454  & 0.5819  & 0.5538  & 0.4673  & 0.3332  \\
Withers-Nadarajah 2  & 0.8621  & 0.8763  & 0.8720  & 0.8472  & 0.7059  & 0.6560  & 0.5286  & 0.3500  \\
Haddad and Provost & 0.9717  & 0.8996  & 0.7400  & 0.3827  & 0.9669  & 0.9041  & 0.7563  & 0.4357  \\
Our Generalized method & 0.8010 & 0.8860 & 0.9240 & 0.9320 & 0.7520 & 0.8210 & 0.6970 & 0.5380 \\
PB & 0.9253 & 0.9100 & 0.9251 & 0.9381 & 0.8355 & 0.7351 & 0.6498 & 0.5407 \\ \hline
 & \multicolumn{8}{|c|}{Expected Length} \\ \hline
 & \multicolumn{4}{|c||}{$\rho =0$} & \multicolumn{4}{|c|}{$\rho =0.6$} \\ \hline
Method               \qquad   \qquad \qquad           $n$ & 3 & 5 & 10 & 25 & 3 & 5 & 10 & 25 \\ \hline
Exact  & --- & 1.4621 & 1.1186 & 0.7494 & --- & 1.2839 & 0.8701 & 0.5265 \\
z -transform & --- & 1.5579 & 1.1865 & 0.7612 & --- & 1.2980 & 0.9223 & 0.6134 \\
Hotelling $Z_1$ & 1.6864 & 1.4973 & 1.1531 & 0.7607 & 1.4989 & 1.2395 & 0.9136 & 0.6132 \\
Hotelling $Z_2$ & 1.8064 & 1.5282 & 1.1544 & 0.7612 & 1.6577 & 1.2661 & 0.9239 & 0.6159 \\
Hotelling $Z_3$ & 1.6624 & 1.4963 & 1.1472 & 0.7603 & 1.4482 & 1.2158 & 0.9135 & 0.6127 \\
Hotelling $Z_4$ & 1.7513 & 1.5235 & 1.1545 & 0.7599 & 1.5977 & 1.2628 & 0.9286 & 0.6255 \\
Ruben  & --- & 1.5274 & 1.1235 & 0.7497 & --- & 1.3368 & 0.9146 & 0.6128 \\
Muddapur 1 & 1.0285 & 0.4395 & 0.1527 & 0.0338 & 0.9343 & 0.3503 & 0.1098 & 0.0252 \\
Muddapur 2  & 1.8256 & 1.5572 & 1.1645 & 0.7630 & 1.6914 & 1.2850 & 0.9277 & 0.6169 \\
signed log-LR & 1.1442 & 1.2251 & 1.0381 & 0.7278 & 0.9670 & 0.9768 & 0.7685 & 0.5011 \\
Modified signed log-LR & 1.6812 & 1.4225 & 1.1223 & 0.7523 & 1.5359 & 1.2280 & 0.8670 & 0.5267 \\
Krishnamoorthy & 1.6517 & 1.5037 & 1.3772 & 1.2297 & 1.5231 & 1.2456 & 1.0262 & 0.6197 \\
Withers-Nadarajah 1  & 1.2671 & 1.1719 & 0.9703 & 0.6968 & 0.9692 & 0.8795 & 0.7305 & 0.5450 \\
Withers-Nadarajah 2  & 1.4962 & 1.3333 & 1.0556 & 0.7211 & 1.2093 & 1.0378 & 0.7995 & 0.5633 \\
Haddad and Provost & 1.7281 & 1.3602 & 0.9709 & 0.6320 & 1.7146 & 1.3439 & 0.9573 & 0.6223 \\
Our Generalized method & 1.5280 & 1.4654 & 1.1530 & 0.7612 & 1.2453 & 1.1578 & 0.7939 & 0.8661 \\
PB & 1.8021 & 1.5397 & 1.1872 & 0.7628 & 1.6239 & 1.2739 & 0.9234 & 0.6142 \\ \hline
\end{tabular}
\end{center}
\end{table}

$ \ $

\newpage

\begin{table}
\begin{center}
\caption{ The 95\% confidence intervals for the examples 1 and 2.} \label{tab.ex}

\begin{tabular}{|l|c|c|c|} \hline
        &            & Example 2           &    Example 2  \\
 method &  Example 1 &  (4th Grade Score)  &   (8th Grade Score) \\ \hline
Exact  & -0.913 , -0.447 & 0.897 , 0.993 & 0.890 , 0.992 \\
z-transform & -0.919 , -0.461 & 0.905 , 0.994 & 0.899 , 0.993 \\
Hotelling $Z_1$ & -0.919 , -0.463 & 0.862 , 0.996 & 0.853 , 0.996 \\
Hotelling $Z_2$ & -0.919 , -0.463 & 0.907 , 0.994 & 0.901 , 0.993 \\
Hotelling $Z_3$ & -0.918 , -0.465 & 0.909 , 0.994 & 0.903 , 0.993 \\
Hotelling $Z_4$ & -0.918 , -0.464 & 0.909 , 0.994 & 0.903 , 0.993 \\
Ruben  & -0.915 , -0.440 & 0.888 , 0.993 & 0.881 , 0.993 \\
Muddapur 1 & -0.010 ,  -0.004 & 0.899 , 0.992 & 0.732 , 0.897 \\
Muddapur 2 & -0.920 , -0.459 & 0.905 , 0.993 & 0.899 , 0.993 \\
signed log-LR & -0.912 , -0.494 & 0.920 , 0.993 & 0.914 , 0.992 \\
Modified signed log-LR & -0.913 , -0.450 & 0.909 , 0.993 & 0.901 , 0.992 \\
Krishnamoorthy & -0.913 , -0.448 & 0.897 , 0.993 & 0.890 , 0.992 \\
Withers-Nadarajah 1  &  0.067 ,  0.781 & 0.999 , 1.000 & 0.999 , 1.000 \\
Withers-Nadarajah 2 ) &  0.037 ,  0.792 & 0.999 , 1.000 & 0.999 , 1.000 \\
Haddad and Provost &  -0.477 , 0.487 & 0.981 , 0.999 & 0.993 , 1.000 \\
Our Generalized method & -0.924 , -0.484 & 0.919 , 0.994 & 0.913 , 0.994 \\
PB & -0.919 , -0.461 & 0.906 , 0.994 & 0.900 , 0.993 \\ \hline
\end{tabular}
\end{center}
\end{table}

$ \ $
\newpage

\section{ Conclusion}
\label{sec.con}
The correlation coefficient is an important parameter in a bivariate normal distribution. In this paper, we have evaluated and compared the coverage probabilities and expected lengths of seventeen confidence intervals for this parameter via a simulation study. For sample size larger than 10, the coverage probabilities of most methods are satisfactory when the data has a bivariate normal distribution or when the data has a bivariate log-normal distribution with small variances. As opposed to other methods, for all sample size, the coverage probabilities of the exact method, Fisher's z-transformation method, Krishnamoorthy's generalized confidence interval, PB confidence interval and Withers and Nadarajah's approaches are close to the confidence coefficient. The exact method is very difficult to implement and requires solving a complex integral, and the Fisher's z-transformation method is not applicable for $n=3$. Therefore, our suggestions are Krishnamoorthy's generalized confidence interval, PB confidence interval, and Withers and Nadarajah's approaches for the correlation coefficient. However, none of the existing confidence intervals is satisfactory when the data follow a bivariate log-normal distribution. Therefore, further investigation could be done to find a confidence interval for this case.

\section*{Acknowledgment}
The author would like to thank the editor and reviewers for many constructive suggestions.

%
%


%
%
%
%
%
%
\end{document}